\documentclass[11pt]{article}

\usepackage[T1]{fontenc}
\usepackage[utf8]{inputenc}
\usepackage{algorithm}
\usepackage{algpseudocode}
\usepackage{amsmath}
\usepackage{amssymb}
\usepackage{amsthm}
\usepackage{booktabs}
\usepackage{caption}
\usepackage{enumitem}
\usepackage{float}
\usepackage[margin=1in]{geometry}
\usepackage{graphicx}
\usepackage{microtype}
\usepackage[numbers,sort&compress]{natbib}
\usepackage{tikz}
\usepackage{xcolor}
\usepackage[colorlinks=true,linkcolor=blue,citecolor=blue,urlcolor=blue]{hyperref}
\usetikzlibrary{arrows.meta,positioning,shapes.geometric}

\setlist{itemsep=0.15em,topsep=0.35em}

\newtheorem{theorem}{Theorem}[section]
\newtheorem{lemma}[theorem]{Lemma}
\newtheorem{proposition}[theorem]{Proposition}
\newtheorem{definition}[theorem]{Definition}

\newcommand{\BDTS}{BDTS}

\newcommand{\History}{\mathcal{H}}

\newcommand{\Active}{\mathsf{active}}
\newcommand{\Closed}{\mathsf{closed}}
\newcommand{\tokens}{\mathrm{tok}}
\newcommand{\bytes}{\mathrm{bytes}}
\newcommand{\children}{\mathrm{children}}
\newcommand{\desc}{\mathrm{desc}}

\title{Budgeted Dynamic Trace Structures for Token-Efficient Sequential Computation}

\author{%
Faruk Alpay\\
Department of Computer Engineering\\
Bahçeşehir University, Istanbul, Turkey\\
\texttt{faruk.alpay@bahcesehir.edu.tr}
\and
Levent Sarioğlu\\
Department of Computer Engineering\\
Bahçeşehir University, Istanbul, Turkey\\
\texttt{levent.sarioglu@bahcesehir.edu.tr}
}
\date{}

\begin{document}

\maketitle

\begin{abstract}
Sequential computation increasingly produces long traces containing nested
branches, status transitions, textual payloads, and compact summaries of
earlier execution.  This paper introduces \emph{budgeted dynamic trace
structures} (\BDTS), a data-structural framework for maintaining rooted
trace graphs and append-only histories under an explicit byte or token
budget.  \BDTS{} combines status-filtered reachability, cursor pagination,
soft-capped recency logs, reference-counted observation keys, delta
overlays, bounded cost caches, and summary-plus-suffix compaction.  We give
formal invariants, asymptotic bounds, and an ancillary Rust implementation
with reproducible benchmarks.  Across synthetic traces with 10,000--40,000
vertices, the prototype builds graphs in 0.58--2.72 ms, enumerates all
descendants in 0.24--1.42 ms, and compacts histories of 350k--2.71M
approximate tokens to 1,048--4,120 approximate tokens; tokenizer and
forward measurements with three public model targets reduce 3,359--3,360
trace tokens to 432--433 tokens.
\end{abstract}

\section{Introduction}

Long-running sequential computations leave traces.  A trace may contain
small state transitions, large textual payloads, references to external
objects, branch points, and later repairs to earlier branches.  In many
systems the trace is not merely an audit log.  It is an active data
structure: later computation asks for the current descendants of a node,
filters a branch by status, scans recent events, loads a page of older
items, or transmits a compact view to a downstream procedure.  If the
available representation budget is fixed, the trace must also be shortened
without destroying the operational boundary between old summarized material
and new material still worth preserving verbatim.

The classical vocabulary of data structures gives strong tools for parts
of this problem.  Depth-first and breadth-first traversal give linear-time
reachability in static graphs \cite{tarjan1972dfs,cormen2009introduction}.
Dynamic graph algorithms analyze update/query tradeoffs under edge changes
\cite{even1981online,henzinger1999randomized,demetrescu2004dynamic,sankowski2004dynamic,roditty2008dynamic}.
Succinct data structures reduce static tree and sequence representations
\cite{jacobson1989succinct,raman2002succinct,navarro2016compact}.  Caching
theory explains bounded auxiliary storage \cite{sleator1985amortized,megiddo2003arc}.
Streaming algorithms and sliding-window summaries address online retention
\cite{muthukrishnan2005streams,datar2002sliding,cormode2005countmin}.  Data
compression and subword tokenization describe how symbolic material can be
shortened or measured \cite{ziv1977universal,ziv1978compression,welch1984technique,gage1994bpe,sennrich2016bpe,kudo2018sentencepiece}.
Yet the combined trace problem has a different shape.  The representation
budget is part of the interface, not merely a storage optimization; the
graph and the history must remain coordinated; and compaction must produce a
valid replacement trace, not only a compressed byte stream.

This paper formulates the combined problem as \BDTS{}.  The core object is a
rooted directed trace graph, in which each non-root vertex has one current
parent and each edge has a state from a small finite set.  The graph is
paired with an append-only history of trace items.  A budget policy maps
either bytes or tokenizer units to a common accounting interface.  A
compaction operation selects a recent suffix subject to that budget,
truncates the boundary item if needed, and prepends a summary item that
records the abstract effect of the discarded prefix.  The design is simple
enough to serve as an ancillary artifact, but precise enough to support
complexity analysis and reproducible measurements.

The phrase ``token-efficient'' is used here in a strictly data-structural
sense.  Tokens are a representation-cost unit induced by a tokenizer, just
as bytes are a representation-cost unit induced by an encoding.  The paper
does not depend on any semantic claim about the content of the trace.  It
asks a concrete question: how should a dynamic trace be maintained when the
system repeatedly needs a bounded, valid, recent, and queryable view?

\subsection{Contributions}

The paper makes four contributions.

\begin{enumerate}[leftmargin=2em]
\item We define \BDTS, a unified data-structure interface for status-filtered
rooted reachability, append-only trace histories, and bounded replacement
views.
\item We give algorithms for edge upsert, edge-state update, state-filtered
child listing, state-filtered descendant traversal, history append, and
budgeted compaction.
\item We prove bounds for the abstract structure and identify the constants
that matter in practical token-budget settings.
\item We provide an ancillary Rust crate and measured benchmark results,
including a concrete tokenizer and forward measurement using the
\texttt{distilbert/distilgpt2} model card target \cite{hfDistilgpt2,sanh2019distilbert}.
\end{enumerate}

\subsection{Why a New Formulation Is Useful}

A trace with branches can be stored as a flat log, but flat storage makes
descendant queries expensive or forces a later index.  A trace graph can
answer reachability queries, but graph storage alone does not describe how
to fit a long history into a fixed representation budget.  A compressor can
reduce bytes, but a compressor does not usually preserve a recent suffix as
explicit trace items.  A sliding window preserves a suffix, but it does not
carry a summary of the discarded prefix.  \BDTS{} places these operations in
one interface, so that their invariants can be checked together.

This formulation also clarifies a computational-complexity boundary.  Fully
dynamic reachability in directed graphs is a difficult problem under general
insertions and deletions.  \BDTS{} deliberately uses a restricted update model:
each child has one current parent, state transitions do not remove the edge
record, and descendant queries are rooted and state-filtered.  This
restriction yields an inexpensive structure with predictable behavior, while
still covering the common trace pattern of a computation that branches from
earlier states and later marks subtraces as active or closed.

\section{Model}

\subsection{Trace Graphs}

Let $V$ be a finite set of trace identifiers with distinguished root $r$.
Let $\Sigma$ be a finite set of edge states.  In the prototype and
experiments we use $\Sigma=\{\Active,\Closed\}$, but the algorithms extend
to any constant-size state set.  A trace graph is a directed graph
$G=(V,E)$ in which each edge is a triple $(u,v,\sigma)$ with parent $u$,
child $v$, and state $\sigma\in\Sigma$.

\begin{definition}[Current-parent invariant]
A trace graph satisfies the current-parent invariant if each non-root vertex
$v$ appears as the child of at most one edge.  Equivalently, there is a
partial function $p:V\setminus\{r\}\to V$ such that
$(p(v),v,\sigma)\in E$ for one state $\sigma$ whenever $v$ has a parent.
\end{definition}

The current-parent invariant does not prevent a vertex from acquiring a new
parent over time.  An upsert operation may move $v$ from an old parent to a
new parent.  The invariant says that after the operation completes the graph
contains one current parent for $v$.  This is the natural dynamic analogue
of an arborescent trace: a vertex records the computation state from which
it currently branches.

For a state predicate $P:\Sigma\to\{\mathsf{true},\mathsf{false}\}$, define
\[
  \children_P(u)=\{v\mid (u,v,\sigma)\in E \wedge P(\sigma)\}.
\]
The descendant set $\desc_P(u)$ is the least set $D$ such that
$\children_P(u)\subseteq D$ and, for every $x\in D$,
$\children_P(x)\subseteq D$.  Thus $\desc_P(u)$ is the rooted reachability
set induced by edges whose states satisfy $P$.

\subsection{Histories and Budgets}

The graph captures branch structure.  The history captures ordered trace
material.  A history is a sequence
\[
  \History = \langle h_1,h_2,\ldots,h_n\rangle,
\]
where each item $h_i=(id_i,payload_i)$ has a trace identifier and a string
payload.  The payload may encode a structured event, a rendered state, or a
compact textual summary.  The model does not require a particular payload
syntax.

\begin{definition}[Budget policy]
A budget policy is a pair $(m,B)$ where $m\in\{\bytes,\tokens\}$ is a
measurement mode and $B\in\mathbb{N}$ is a nonnegative limit.  The induced
cost of a payload $x$ is either $\bytes(x)$ or $\tokens(x)$.
\end{definition}

The prototype exposes both byte and token budgets.  For fast online
estimation, it also supports the approximation
$\widehat{\tokens}(x)=\lceil |x|/4\rceil$, matching the common engineering
rule that one token is roughly four UTF-8 bytes for English-like material.
The model, however, allows an exact tokenizer to be used for measurement.

\subsection{Compaction}

Compaction maps a long history to a replacement history.  The replacement
must keep the recent suffix explicit while recording the discarded prefix in
a summary item.  Let $S$ be a summary payload.  Given budget $B$, compaction
chooses the longest suffix
\[
  \langle h_j,\ldots,h_n\rangle
\]
whose payload cost is at most $B$, truncating $h_j$ if necessary.  The
replacement is
\[
  \langle (0,S), h'_j,\ldots,h_n\rangle,
\]
where $h'_j$ is either $h_j$ or a middle-truncated version of $h_j$.  The
summary item is deliberately outside the suffix budget in our default
policy.  This choice makes the suffix budget stable: users of the structure
can reserve a fixed amount of recent material independent of the summary's
length.  An alternate policy may charge the summary against the same budget;
the algorithms are unchanged after subtracting the summary cost from $B$.

\begin{definition}[Boundary-safe truncation]
A truncation function is boundary-safe if it never splits a UTF-8 character
and if the output identifies that material was omitted.
\end{definition}

Boundary safety matters because a trace view is often parsed or displayed
after compaction.  A broken byte sequence is not a valid replacement item.
The ancillary implementation uses middle truncation with an explicit
omission marker.

\subsection{Operations}

The \BDTS{} interface supports the following operations.

\begin{itemize}[leftmargin=2em]
\item $\mathsf{upsert}(u,v,\sigma)$: insert or move the current edge for
child $v$ so that its parent is $u$ and its state is $\sigma$.
\item $\mathsf{setState}(v,\sigma)$: update the state of the current edge
whose child is $v$.
\item $\mathsf{children}(u,P)$: list the children of $u$ whose states
satisfy $P$.
\item $\mathsf{descendants}(u,P)$: list descendants of $u$ in breadth-first
order using only edges satisfying $P$.
\item $\mathsf{append}(h)$: append a trace item to the history.
\item $\mathsf{compact}(B,S)$: produce a summary-plus-suffix replacement
history under budget $B$ and summary $S$.
\end{itemize}

\begin{figure}[H]
\centering
\begin{tikzpicture}[
  node distance=1.25cm and 1.5cm,
  trace/.style={draw,rounded corners,minimum width=1.1cm,minimum height=.55cm},
  edge/.style={-Latex,thick},
  edgelabel/.style={fill=white,inner sep=1.4pt},
  active/.style={edge},
  closed/.style={edge,dashed}
]
\node[trace] (r) {$r$};
\node[trace,below left=of r] (a) {$1$};
\node[trace,below right=of r] (b) {$2$};
\node[trace,below=of a] (c) {$3$};
\node[trace,below=of b] (d) {$4$};
\node[trace,right=of d] (e) {$5$};
\draw[active] (r) -- node[edgelabel,left,xshift=-5pt] {$\Active$} (a);
\draw[closed] (r) -- node[edgelabel,right,xshift=6pt] {$\Closed$} (b);
\draw[active] (a) -- node[edgelabel,left,xshift=-2pt] {$\Active$} (c);
\draw[active] (b) -- node[edgelabel,left,xshift=-2pt] {$\Active$} (d);
\draw[closed] (b) -- node[edgelabel,above,pos=.62,xshift=3pt,yshift=5pt] {$\Closed$} (e);
\end{tikzpicture}
\caption{A status-filtered trace graph.  With predicate
$P(\sigma)\equiv(\sigma=\Active)$, the descendants of $r$ are $\{1,3\}$,
while the descendants of $2$ are $\{4\}$.}
\label{fig:trace-graph}
\end{figure}

\subsection{Problem Variants}

The default problem charges only the retained suffix against the user-facing
budget.  This is the most convenient policy when the summary has its own
reserved allowance.  Three variants are useful in other settings.

\emph{Charged-summary compaction} charges the summary item against the same
budget as the suffix.  If the summary cost is $s$, Algorithm~\ref{alg:compact}
is run with budget $\max(0,B-s)$ and the summary is prepended only if
$s\le B$.  If $s>B$, the summary itself must be truncated.  The complexity
is unchanged, but the retained suffix may be empty even for a positive
budget.

\emph{Lossless-backed compaction} stores the discarded prefix in a cold
archive and places a reference to that archive in the summary payload.  The
replacement history remains bounded, while exact replay is still possible
by consulting the archive.  The live data structure does not need to know
the archive representation; it needs only a stable reference.

\emph{Predicate-indexed compaction} applies different budgets to different
classes of trace items.  For example, structural items can be given a larger
retention weight than verbose payload items.  This variant replaces the
single cost function by $cost(h_i,\pi_i)$, where $\pi_i$ is an item class.
The backward scan remains the same, but the interpretation of maximality is
with respect to class-weighted cost.

The paper analyzes the default variant because it is the smallest complete
problem.  The variants show that \BDTS{} is a family of interfaces rather
than one hard-coded policy.

\section{Data Structures}

\subsection{Adjacency with State Partitions}

The abstract structure stores, for each parent $u$, adjacency buckets
$A[u,\sigma]$ keyed by state.  It also stores a map $M[v]=(u,\sigma)$ from
child to current parent and state.  With balanced dictionaries, an upsert
first checks whether $v$ already has a parent, removes it from the old
bucket if necessary, inserts it into the new bucket, and updates $M$.

For constant-size $\Sigma$, the space is $O(|V|+|E|)$.  If buckets are
stored as sorted arrays, direct listing is cache friendly and simple.  If
buckets are stored as balanced sets, movement and state update have
$O(\log d)$ worst-case cost, where $d$ is the affected parent's degree.
The ancillary crate uses sorted vectors for clarity and deterministic
output, while the analysis states the standard dictionary version.

\subsection{History Storage}

The history is append-only between compactions.  Appending an item is
$O(1)$ amortized with a growable array.  A page query can be supported by
storing offsets at fixed intervals; the artifact exposes the simpler
sequence interface, while the manuscript's model permits page boundaries.
The important invariant is that the history order is independent of graph
adjacency order.  The graph answers structural questions; the history
answers temporal and representation questions.

\subsection{Consistency Between Graph and History}

The graph and history are intentionally decoupled, but a full trace system
usually wants them to satisfy a weak consistency relation.  If a history
item refers to identifier $v$, then either $v$ is present in the graph or
the item is explicitly marked as external to the graph.  This condition is
weaker than requiring every vertex to have a history item.  Some vertices
may be structural branch points with no payload; some payloads may describe
several vertices at once.  The weak condition is enough to prevent a
compacted view from containing an unexplained identifier.

\begin{definition}[Trace-reference consistency]
A pair $(G,\History)$ is trace-reference consistent if every identifier
appearing in a non-summary history item is either a vertex of $G$ or belongs
to a declared external identifier namespace.
\end{definition}

Trace-reference consistency is preserved by edge upserts, state updates,
and descendant queries because they do not modify history payloads.  It is
preserved by append if the appended item satisfies the condition.  It is
preserved by compaction because the retained suffix is copied from the
original history and the summary item is allowed to describe a prefix rather
than refer to one current vertex.  This small invariant is helpful in
implementations that expose both graph queries and history pages through the
same interface.

\subsection{Pagination}

Long histories are rarely consumed in one scan.  A page interface returns a
slice of items plus a cursor.  For append-only histories, a cursor can be an
integer offset, a byte offset in a serialized log, or a pair consisting of a
turn identifier and an item offset.  The asymptotic behavior depends on the
chosen storage layer.  If the history is an in-memory vector, page retrieval
is $O(p)$ for page size $p$.  If the history is a line-oriented file,
retrieval from an arbitrary offset is $O(p)$ after the offset is known, but
constructing the offset may require an auxiliary sparse index.

\begin{algorithm}[htbp]
\caption{Page an append-only history}
\label{alg:page}
\begin{algorithmic}[1]
\Procedure{Page}{$\History,c,p$}
  \State $i\gets$ cursor offset represented by $c$
  \State $R\gets \langle h_i,h_{i+1},\ldots,h_{\min(i+p-1,n)}\rangle$
  \If{$i+p\le n$}
    \State $c'\gets$ cursor for offset $i+p$
  \Else
    \State $c'\gets \bot$
  \EndIf
  \State \Return $(R,c')$
\EndProcedure
\end{algorithmic}
\end{algorithm}

Pagination interacts cleanly with compaction if each compacted replacement
is treated as a new history epoch.  Cursors from old epochs can either be
rejected or mapped through an epoch table.  The ancillary artifact keeps the
core sequence interface small, but the model includes pagination because it
is the natural way to expose long traces.

\subsection{Budget Accounting}

Budget accounting is separated from the history.  A budget policy supplies
\texttt{byte\_budget()} and \texttt{token\_budget()} conversions.  Exact
tokenization can be expensive if applied repeatedly to large payloads, so
the framework permits a bounded cost cache.  The cache is not part of the
semantic state: dropping cached estimates may slow a later operation but
does not change the valid output.

\begin{proposition}[Cache noninterference]
Let $C$ be any finite cost cache whose entries are recomputable from payload
strings and a fixed budget policy.  Evicting any subset of $C$ cannot change
the graph, history, or compacted replacement selected by exact accounting.
\end{proposition}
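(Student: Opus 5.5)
The plan is to model the cache as a memoization layer over a pure cost function and to show that every operation's output is determined by the cost values it consumes, never by where those values came from. Fix the budget policy $(m,B)$. Let $c:\Sigma^*\to\mathbb{N}$ denote the exact cost function, either $\bytes$ or $\tokens$, which is deterministic in the payload string. A cache $C$ is a finite partial map from payloads to values. The hypothesis that entries are recomputable from payload strings and the fixed policy means $C(x)=c(x)$ for every $x\in\mathrm{dom}(C)$. An accounting lookup under $C$ is then the function $\mathrm{look}_C(x)$, which returns $C(x)$ if $x\in\mathrm{dom}(C)$ and computes $c(x)$ (possibly inserting it) otherwise. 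The first step is the pointwise identity $\mathrm{look}_C(x)=c(x)$ for all $x$ and all consistent caches $C$, including $C'=C\setminus K$ for any evicted subset $K$. Eviction preserves consistency, because a restriction of a map that agrees with $c$ still agrees with $c$. Insertions performed during a lookup also write $c(x)$, so consistency is an invariant along any execution.

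Next I would go through the operations. $\mathsf{upsert}$, $\mathsf{setState}$, $\mathsf{children}$, $\mathsf{descendants}$, $\mathsf{append}$ and $\mathsf{Page}$ read and write only $A$, $M$ and $\History$. None of them consults the cache, so their outputs and state transitions are literally independent of $C$. This covers the graph and history components of the claim.

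The substantive case is $\mathsf{compact}(B,S)$. I would argue by induction over the steps of the backward scan of Algorithm~\ref{alg:compact}. At step $k$ the scan's state consists of the current index and the accumulated cost. The next decision (include $h_k$, truncate it, or stop) is a function of that state, $B$, and $\mathrm{look}_C(payload_k)$. The truncation of the boundary item must also be shown to depend only on costs of candidate strings, hence only on $c$. By the pointwise identity, each lookup returns the same value under $C$ and under $C'$, so the two executions make identical decisions at every step. They therefore select the same index $j$, the same $h'_j$, and the same replacement $\langle(0,S),h'_j,\ldots,h_n\rangle$. The final step is to note that the cache's own contents may differ after the run. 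This is harmless because $C$ is explicitly excluded from the semantic state: the claim concerns only the graph, the history, and the selected replacement.

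The main obstacle is making precise that the compaction and truncation procedures have no hidden dependence on the cache beyond lookup values. Examples would be branching on whether a lookup was a hit, or a truncation search that behaves differently depending on which prefixes happen to be cached. I would handle this by stating as a modeling assumption that all cost queries go through $\mathrm{look}_C$ and that control flow depends only on returned values. Then the middle-truncation routine reduces to a deterministic search over costs of candidate cut points, and it falls under the same step-by-step induction.
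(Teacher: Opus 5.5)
Your proposal is correct and follows the same argument as the paper. Every semantic operation depends only on the graph, the history, the payload strings, and the measurement function, and a cache hit and a recomputation both return $c(x)$, so eviction can affect only running time. Your lookup identity, the consistency invariant under eviction and insertion, and the explicit assumption that control flow (including truncation) depends only on returned cost values make precise what the paper's short proof leaves implicit.
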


\begin{proof}
The semantic operations depend on graph edges, history order, payload
strings, and the measurement function.  A cache entry stores only the result
of applying the measurement function to a payload.  If an entry is absent,
the same value is recomputed.  Thus cache contents affect only the running
time of measurement, not the selected suffix or graph result.
\end{proof}

\subsection{Compaction Window}

A compaction window records which compacted epoch is current and optionally
stores a prefill estimate, i.e., an estimate of representation cost already
present at the start of the window.  Starting a new window increments the
ordinal and clears the estimate.  This small structure prevents an
implementation from conflating costs measured before and after replacement.

\subsection{Soft-Capped Append Logs}

A live trace may need a small durable event log even when the full history
has a separate archival representation.  The source architecture motivates a
\emph{soft-capped log}: append entries under an exclusive writer discipline,
and whenever the byte length exceeds a hard cap $M$, trim old entries until
the remaining log is at or below a soft target $\rho M$ for some
$0<\rho\le 1$.  The newest entry is always preserved, even if it alone
exceeds the soft target.  The soft target prevents trimming on every append
after the first overflow; after a trim, several later appends can occur
before the next trim.

\begin{definition}[Soft-capped log]
A soft-capped log is a sequence $L$ with hard cap $M$ and soft ratio
$\rho$.  After appending item $x$, if $|L|_{\bytes}>M$, the structure removes
oldest items until $|L|_{\bytes}\le\max(\rho M, |x|_{\bytes})$ or only $x$
remains.
\end{definition}

\begin{lemma}[Newest-entry preservation]
Soft-cap enforcement never removes the item most recently appended.
\end{lemma}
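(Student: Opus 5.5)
The plan is to argue directly from the definition of the soft-capped log, reading the trimming procedure as a loop that removes items from the oldest end. The point is that the loop can never reach the newest entry.

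Fix a log $L=\langle \ell_1,\ldots,\ell_k\rangle$ just after appending $x$, so $\ell_k=x$. If $|L|_{\bytes}\le M$, nothing is removed and the claim is immediate. Otherwise the structure removes $\ell_1,\ell_2,\ldots$ in order. Before each removal it tests whether the current suffix $\langle \ell_i,\ldots,\ell_k\rangle$ satisfies $|\cdot|_{\bytes}\le\max(\rho M,|x|_{\bytes})$, or whether only $x$ remains.

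We prove by induction on the number of removals that every state reached is a nonempty suffix ending at $\ell_k=x$, and that no removal targets $\ell_k$. The base case is the full log. For the inductive step, suppose the current state is $\langle \ell_i,\ldots,\ell_k\rangle$. If $i=k$, only $x$ remains, so the second stopping clause fires and no removal happens. If $i<k$, the oldest item is $\ell_i\neq x$, so removing it yields the suffix $\langle\ell_{i+1},\ldots,\ell_k\rangle$, which still ends at $x$.

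The loop terminates because each step shortens a finite sequence. As a consistency check, the threshold $\max(\rho M,|x|_{\bytes})$ is always at least $|x|_{\bytes}$, so the singleton state $\langle x\rangle$ satisfies the byte condition as well. Thus the two stopping clauses agree there, and the procedure never has a reason to continue past $x$.

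The only delicate step is making precise that the removal order is strictly oldest-first and that the stopping test is evaluated before each removal. Both facts are read off the definition's phrase ``removes oldest items until \ldots\ or only $x$ remains''. Once the loop is written with the guard checked first, the induction closes with no further work.
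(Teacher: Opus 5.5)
Your proof is correct and is the same argument as the paper's: trimming removes only from the oldest end, the newest entry sits at the back, and the loop halts once a single item remains, so it never reaches $x$; your induction on the number of removals just makes this explicit. The observation that the threshold $\max(\rho M,|x|_{\bytes})$ already admits the singleton $\langle x\rangle$ is a harmless redundancy check, since the one-item stopping clause alone suffices.
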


\begin{proof}
The enforcement loop removes only from the front of the deque and stops when
one item remains.  Since the newest item is appended at the back, it cannot
be removed unless all older entries have already been removed, at which
point the loop stops.
\end{proof}

Soft-capped logs are different from summary compaction.  Compaction creates
a semantic replacement history.  Soft capping is a storage discipline for a
bounded log where retaining the newest raw entries is more important than
retaining a summary of the discarded entries.  Both structures are useful:
one supplies a bounded view, and the other supplies bounded durable recency.

\subsection{Reference-Counted Observation Registry}

Dynamic trace systems often have many consumers interested in overlapping
regions of the trace.  Registering every consumer separately with the
underlying event source is wasteful.  \BDTS{} therefore includes a
reference-counted observation registry.  Each subscriber registers keys in
one of two modes: exact, which reports only changes to that key, or
recursive, which reports changes to the key and its descendants under a
separator order.  The registry deduplicates repeated registrations,
maintains counts by key and mode, and exposes the effective mode needed by
the underlying source.

\begin{definition}[Effective observation mode]
For a key $x$, let $c_E(x)$ and $c_R(x)$ be the exact and recursive
registration counts.  The effective mode is recursive if $c_R(x)>0$, exact
if $c_R(x)=0$ and $c_E(x)>0$, and absent otherwise.
\end{definition}

This rule is a small data-structural optimization.  Recursive observation
dominates exact observation for the same key, so the underlying source needs
only the strongest active mode.  Removing a registration decrements the
appropriate counter; the source is reconfigured only when the effective mode
changes.  The ancillary artifact implements this registry over strings, but
the same rule applies to any ordered namespace.

\subsection{Delta Overlays}

The trace graph records structural lineage, while the history records
temporal payloads.  A third useful object is a \emph{delta overlay}: a
temporary map that aggregates exact changes between a baseline key-value
state and a current key-value state.  The overlay stores three maps:
baseline values by key, current values by key, and origin keys for moved
current keys.  It supports add, delete, update, and move-update operations.
If an operation is not exact, the overlay can be invalidated, after which no
exact diff is reported.

Delta overlays matter because trace compaction frequently wants a short
description of what changed during a window.  Replaying every item in the
window is expensive; storing only the final key-level delta can be much
smaller.  The overlay is not a replacement for the history, since it loses
operation order, but it is a useful auxiliary summary.  It is also a good
example of the general \BDTS{} principle: keep the exact active structure
small, and make lossy or order-insensitive summaries explicit.

\section{Algorithms}

\subsection{Graph Update and Query}

Algorithm~\ref{alg:upsert} shows edge upsert.  The operation is intentionally
total: if the child already appears under another parent, it is moved.  This
is useful for traces that are repaired or rebased.

\begin{algorithm}[htbp]
\caption{Upsert a status-labeled trace edge}
\label{alg:upsert}
\begin{algorithmic}[1]
\Procedure{Upsert}{$u,v,\sigma$}
  \If{$M$ contains $v$}
    \State $(u_0,\sigma_0)\gets M[v]$
    \State remove $v$ from $A[u_0,\sigma_0]$
  \EndIf
  \State insert $v$ into $A[u,\sigma]$
  \State $M[v]\gets(u,\sigma)$
\EndProcedure
\end{algorithmic}
\end{algorithm}

State update is a special case: find $(u,\sigma_0)=M[v]$, remove $v$ from
$A[u,\sigma_0]$, insert it into $A[u,\sigma]$, and update $M[v]$.

For descendant enumeration, breadth-first traversal uses the predicate to
select buckets.  If $P$ accepts a set $\Sigma_P\subseteq\Sigma$, then the
children of $u$ are the concatenation of $A[u,\sigma]$ for
$\sigma\in\Sigma_P$, optionally sorted by identifier for deterministic
output.  Breadth-first order is useful in trace systems because it reports
shorter branch distance before deeper material.  Depth-first order can be
substituted with the same asymptotic bound.

\begin{algorithm}[htbp]
\caption{Budgeted suffix compaction}
\label{alg:compact}
\begin{algorithmic}[1]
\Procedure{Compact}{$\History,B,S$}
  \State $R\gets$ empty list
  \State $b\gets B$
  \For{$i=n,n-1,\ldots,1$}
    \State $c\gets\tokens(h_i.payload)$
    \If{$c\le b$}
      \State prepend $h_i$ to $R$
      \State $b\gets b-c$
    \ElsIf{$b>0$}
      \State prepend $\textsc{TruncateMiddle}(h_i,b)$ to $R$
      \State $b\gets0$
      \State \textbf{break}
    \Else
      \State \textbf{break}
    \EndIf
  \EndFor
  \State prepend summary item $(0,S)$ to $R$
  \State \Return $R$
\EndProcedure
\end{algorithmic}
\end{algorithm}

\subsection{History Compaction}

Algorithm~\ref{alg:compact} gives the default compaction procedure.  It
scans backward through the history and selects as many recent payloads as
fit under budget.  The first item that does not fit is middle-truncated if
there is any remaining budget.  The summary item is then prepended.  The
algorithm is online with respect to the suffix: it never needs to inspect
discarded prefix items after the boundary has been found.

\begin{lemma}[Suffix maximality]
For a fixed nonnegative budget $B$ and nonnegative item costs, the suffix
selected by Algorithm~\ref{alg:compact} is the longest suffix whose total
cost is at most $B$, up to truncation of the boundary item.
\end{lemma}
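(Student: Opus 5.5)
The plan is a loop invariant on the backward scan of Algorithm~\ref{alg:compact}, followed by a short maximality argument that uses monotonicity of suffix cost. Write $c_i=\tokens(h_i.payload)\ge 0$ and $C(j)=\sum_{i=j}^{n}c_i$ for the cost of the suffix starting at $j$, with $C(n+1)=0$. Because costs are nonnegative, $C$ is nonincreasing in $j$. The set of $j$ with $C(j)\le B$ therefore contains $n+1$ and is upward closed, i.e.\ it is an interval $\{j^\ast,\ldots,n+1\}$. The longest admissible suffix is $\langle h_{j^\ast},\ldots,h_n\rangle$. So it suffices to show that the untruncated items the algorithm keeps are exactly $h_{j^\ast},\ldots,h_n$.

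\textbf{Invariant.} At the start of the iteration with index $i$ (before any break), the algorithm has prepended exactly $h_{i+1},\ldots,h_n$ untruncated, and $b=B-C(i+1)\ge 0$. This holds trivially at $i=n$, where $b=B$. If $c_i\le b$, the item is prepended and $b$ becomes $B-C(i)\ge 0$, so the invariant is preserved. Otherwise the loop breaks, possibly after one truncated prepend, and no further untruncated item is added.

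\textbf{Termination and maximality.} Let $i_0$ be the index at which the loop breaks, or $i_0=0$ if it never breaks. The kept untruncated suffix is $h_{i_0+1},\ldots,h_n$, and its cost $C(i_0+1)\le B$ by the invariant, so it is admissible and $j^\ast\le i_0+1$. If $i_0=0$, the whole history fits and is trivially the longest suffix. If $i_0\ge1$, the break condition gives $c_{i_0}>b=B-C(i_0+1)$, i.e.\ $C(i_0)>B$. By monotonicity, $C(j)>B$ for all $j\le i_0$, so $j^\ast=i_0+1$. The truncated item $\textsc{TruncateMiddle}(h_{i_0},b)$ is the boundary item allowed by the phrase ``up to truncation of the boundary item.'' The branch where $b=0$ just omits that boundary item, which is consistent with the same claim.

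\textbf{Main obstacle.} The only delicate point is the role of nonnegativity. The break discards all earlier items, and this is justified only because $C$ is monotone, so no longer suffix can fit once one item overflows. I would state that step explicitly. I would also note that zero-cost items are handled correctly: when $b=0$, the test $c\le b$ still admits an item with $c=0$, so the scan continues while it remains within budget. This matches maximality, since prepending a zero-cost item keeps the suffix cost at most $B$.
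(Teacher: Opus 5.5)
Your proof is correct and takes the same approach as the paper's. The backward scan stops at the first item $h_{i_0}$ whose cost exceeds the remaining budget, and any longer untruncated suffix contains $h_{i_0}$ together with everything already selected, so its cost exceeds $B$. Your invariant $b=B-C(i+1)$ and the monotonicity of $C$ spell out the step the paper gives in one line, where it uses nonnegativity of costs only implicitly.
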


\begin{proof}
The algorithm scans from the last item to the first, subtracting cost while
the next whole item fits.  Suppose it stops before item $h_j$ because
$cost(h_j)>b$, where $b$ is the remaining budget.  Any longer untruncated
suffix must include $h_j$ plus all already selected items, whose total cost
would exceed $B$.  If $b>0$, a boundary-safe prefix/suffix fragment of
$h_j$ is included, which is the only additional material allowed without
exceeding the budget.
\end{proof}

\subsection{Soft-Cap Enforcement}

Algorithm~\ref{alg:softcap} gives the soft-cap procedure.  The target
$\max(\rho M, |x|)$ makes a deliberate exception for very large newest
entries.  If the newest entry is larger than the soft cap, the structure
keeps that entry and discards older entries.  If the newest entry is small,
the structure trims to the soft target and thereby creates slack for future
appends.

\begin{algorithm}[htbp]
\caption{Enforce a soft byte cap after append}
\label{alg:softcap}
\begin{algorithmic}[1]
\Procedure{EnforceSoftCap}{$L,M,\rho,x$}
  \If{$|L|_{\bytes}\le M$}
    \State \Return $L$
  \EndIf
  \State $T\gets \max(\lfloor \rho M\rfloor, |x|_{\bytes})$
  \While{$|L|_{\bytes}>T$ and $L$ has more than one item}
    \State remove the oldest item from $L$
  \EndWhile
  \State \Return $L$
\EndProcedure
\end{algorithmic}
\end{algorithm}

\begin{proposition}[Amortized trimming]
Assume each appended item has byte length at most $\Delta$ and
$\rho<1$.  After a trim to target at most $\rho M$, at least
$\lfloor(1-\rho)M/\Delta\rfloor$ further appends are required before another
overflow is forced.
\end{proposition}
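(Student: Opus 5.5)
The plan is a direct counting argument on the byte length of the log after a trim. Let $\ell_0=|L|_{\bytes}$ be the length immediately after a trim, and assume $\ell_0\le\rho M$. Overflow means $|L|_{\bytes}>M$, and this is the only condition under which Algorithm~\ref{alg:softcap} removes anything. Between trims the log only grows, by the appended bytes. So after $k$ further appends with no intervening trim, the length is $\ell_0$ plus the sum of the $k$ appended lengths. Since each item has length at most $\Delta$, this is at most $\ell_0+k\Delta\le\rho M+k\Delta$.

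The key step is then to set $k^\ast=\lfloor(1-\rho)M/\Delta\rfloor$ and observe that $k^\ast\Delta\le(1-\rho)M$. For every $k\le k^\ast$, the length after $k$ appends is therefore at most $\rho M+(1-\rho)M=M$. The overflow test $|L|_{\bytes}>M$ fails at each of these appends, so no enforcement is triggered. An induction on $k$ makes this precise: the invariant is that no trim has occurred and the length is at most $\rho M+k\Delta$. The base case is $\ell_0\le\rho M$, and each step adds at most $\Delta$ and leaves the length at most $M$, so the early-return branch is taken.

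The conclusion follows: the first $k^\ast$ appends after a trim cannot overflow, so another overflow needs at least $k^\ast+1$ appends. That is at least $\lfloor(1-\rho)M/\Delta\rfloor$, with one to spare. The hypothesis $\rho<1$ is used only to make the bound nontrivial. Dividing by $\Delta$ tacitly assumes $\Delta>0$, which holds whenever any item is nonempty.

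The main obstacle is making the hypothesis ``a trim to target at most $\rho M$'' precise. The algorithm trims to $T=\max(\lfloor\rho M\rfloor,|x|_{\bytes})$, and the loop may also stop with a single item left. Either can leave a length above $\rho M$ when the newest entry is large, as Lemma on newest-entry preservation allows. I would state explicitly that the proposition covers only the case where the post-trim length is at most $\rho M$. In particular, this holds whenever $|x|_{\bytes}\le\lfloor\rho M\rfloor$ and the loop terminated on the length condition. The large-entry case is excluded rather than handled. With that, the argument above is complete.
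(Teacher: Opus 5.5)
Your proof is correct and is essentially the paper's argument: starting from a post-trim length of at most $\rho M$, each append adds at most $\Delta$ bytes, and enforcement does nothing while the length stays at most $M$, so the first $\lfloor(1-\rho)M/\Delta\rfloor$ appends cannot overflow. Your explicit induction and your exclusion of the large-newest-entry case just state carefully what the paper covers with ``unless the newest entry alone is larger than that value'' (and since $T\ge|x|_{\bytes}$, the loop never stops on the single-item condition with length above $T$, so your extra requirement that it terminate on the length condition holds automatically).
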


\begin{proof}
Immediately after trimming, the log has length at most $\rho M$ unless the
newest entry alone is larger than that value.  Under the stated bounded-item
assumption, each subsequent append increases the length by at most $\Delta$.
The log cannot exceed $M$ until the accumulated increase is greater than
$(1-\rho)M$, giving the stated lower bound.
\end{proof}

\subsection{Observation Registration}

Algorithm~\ref{alg:registry} summarizes registration in the observation
registry.  The input key list is sorted and deduplicated before counters are
updated.  This makes repeated registration of the same key idempotent for a
fixed subscriber.

\begin{algorithm}[htbp]
\caption{Register observed keys}
\label{alg:registry}
\begin{algorithmic}[1]
\Procedure{Register}{$s,K$}
  \State sort and deduplicate $K$
  \For{each $(key,mode)\in K$}
    \If{$(key,mode)$ is not already registered by $s$}
      \State add $(key,mode)$ to subscriber $s$
      \State increment counter $C[key,mode]$
      \State recompute the effective mode of $key$
    \EndIf
  \EndFor
\EndProcedure
\end{algorithmic}
\end{algorithm}

Projection of notifications is a predicate query over registered keys.
For exact mode, key $x$ matches changed key $y$ only when $x=y$.  For
recursive mode, $x$ matches $y$ when $y=x$ or when $y$ lies below $x$ in the
separator namespace.  With a trie over registered keys, projection takes
$O(|y|+z)$ time, where $z$ is the number of subscribers reported.  The
artifact uses maps for clarity; a trie is the natural asymptotic
improvement for very large subscriber sets.

\subsection{Delta Overlay Operations}

The delta overlay maintains three maps.  On update, the old value is stored
in the baseline map if the key has not already been seen, and the new value
is stored in the current map.  On delete, the current value is removed; if
the key was not already current, the supplied old value is recorded in the
baseline map.  On move-update, the source key is associated with a
destination key in the origin map.  A rename pair is reported only if the
origin key exists in the baseline, the destination key exists in the current
map, and the origin key no longer exists in the current map.

\begin{lemma}[Overlay exactness]
If all overlay operations are exact and the overlay is not invalidated, the
set of changed keys reported by the overlay is exactly the symmetric
difference between baseline and current key-value maps, plus keys whose
values differ.
\end{lemma}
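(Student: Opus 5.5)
The plan is an induction over the sequence of overlay operations, using an invariant on the three maps $\beta$ (baseline), $\gamma$ (current) and $o$ (origin). Let $K_0$ denote the true baseline key-value state at the start of the window and $K_t$ the true state after $t$ exact operations. I would first make precise what ``exact'' means. An update supplies the true old value of the key (or $\bot$ if the key is being added). A delete supplies the true old value. A move-update supplies the true source key, whose value the destination key inherits.

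Under that reading, the invariant $I_t$ has two parts.
\begin{itemize}[leftmargin=2em]
\item For every key $k$ touched by some operation up to time $t$, the entry $\beta[k]$, if present, equals $K_0(k)$, and absence of $k$ from $\beta$ means $K_0(k)=\bot$ (add).
\item For every such $k$, $\gamma[k]=K_t(k)$, with removal from $\gamma$ meaning $K_t(k)=\bot$.
\end{itemize}
Untouched keys satisfy $K_t(k)=K_0(k)$ trivially. The base case $t=0$ is vacuous, since both maps are empty and no key has been touched.

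The inductive step is a case split on the operation: add, delete, update, move-update.

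\emph{Update or add on $k$.} The baseline entry is written only if $k$ is unseen. In that case the supplied old value is $K_{t}(k)=K_0(k)$, because an unseen key has not changed. The current entry is set to the new value, which is $K_{t+1}(k)$.

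\emph{Delete.} The key is removed from $\gamma$, and its baseline is recorded only if it was not already current. If it was already seen, the earlier baseline is kept by the inductive hypothesis. If it was unseen, the supplied old value equals $K_0(k)$ by exactness.

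\emph{Move-update.} I would treat it as a delete of the source key followed by an update of the destination key, plus a write to $o$. The origin map does not affect which keys are reported, only the rename pairing, so the invariant needs no clause for it.

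Once $I_t$ holds, the reported changed-key set is $\{k \text{ touched} : \beta\text{-value}(k)\neq\gamma\text{-value}(k)\}$, with $\bot$ for absence. By the invariant this is exactly $\{k : K_0(k)\ne K_t(k)\}$, because untouched keys contribute nothing. I would then split this set into the three pieces named in the statement: keys present in $K_0$ only, keys present in $K_t$ only (together these form the symmetric difference of the key sets), and keys present in both with differing values.

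I expect the main obstacle to be the reporting convention, not the induction. The statement says ``reported,'' and a naive overlay might report every touched key, including one updated and then reverted to its baseline value, or one added and then deleted. So I would first fix the reporting rule as ``report $k$ iff its baseline-value differs from its current-value, treating absence as $\bot$,'' and argue that this is the rule the overlay's output implements. If it is not, the lemma needs exactly this filtering step added. A second subtle point is the move-update clause for the source key: I have to check that marking the source deleted in $\gamma$ while keeping its $\beta$ entry is consistent with the rename-reporting condition stated just before the lemma. I would verify this by direct comparison with that condition.
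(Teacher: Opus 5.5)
Your proposal is correct and follows essentially the same route as the paper, whose proof is the compressed version of your invariant: the baseline map is written at a key's first modification, the current map tracks the live value, untouched keys are absent from both maps, and a key is reported exactly when its optional baseline and current values differ (the reporting rule you propose to fix is precisely the one the paper assumes). One point to tighten is the delete case, where the paper's trigger is ``not already current'' rather than ``unseen'': by the second clause of your invariant, a touched key absent from the current map has true value $\bot$, so an exact delete never reaches a seen-but-not-current key and never overwrites an existing baseline.
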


\begin{proof}
The overlay inserts a baseline value the first time a key is modified and
updates the current map after each exact operation.  Keys that are never
modified are absent from both maps and therefore not reported.  A modified
key is reported precisely when its optional baseline value differs from its
optional current value.  This condition captures additions, deletions, and
value changes.
\end{proof}

\subsection{Truncation}

Middle truncation keeps both the beginning and end of a boundary payload.
This is preferable for trace items whose prefix contains an identifier or
operation name and whose suffix contains a final value.  The omission marker
is charged to the boundary item.  If a strict external budget must include
the marker, the implementation can reserve marker cost before splitting the
payload.  The ancillary prototype uses a practical approximation: split the
byte budget between left and right, align both boundaries to UTF-8 character
boundaries, and insert a marker stating the number of omitted characters.

\subsection{Cost Caching}

Exact tokenizer calls may dominate compaction when payloads are large and
repeatedly measured.  A bounded least-recently-used cache is therefore a
natural auxiliary structure.  The cache key can be a hash of the payload and
the budget mode.  Since cost values are deterministic, cache eviction is
safe by Proposition 1.  Classical paging analysis explains why recency is a
reasonable default when access locality is present \cite{sleator1985amortized};
adaptive replacement is a possible extension when scans and repeated hits
interleave \cite{megiddo2003arc}.

\section{Complexity Analysis}

Let $n=|V|$, $m=|E|$, $d(u)$ be the degree of parent $u$, and
$m_P(u)$ be the number of edges reachable from $u$ under predicate $P$.
Let $L$ be the number of history items and let $\ell_i$ be the measured
cost of item $i$.  Let $q$ be the number of items retained by compaction.

\begin{theorem}[Graph operation bounds]
Using balanced adjacency buckets and a child-to-parent map, \BDTS{} supports
edge upsert and edge-state update in $O(\log d)$ time for the affected
parent degree $d$, direct child listing in $O(k+\log d)$ time for $k$
reported children, and state-filtered descendant enumeration in
$O(m_P(u)+1)$ time after the first reported vertex.
\end{theorem}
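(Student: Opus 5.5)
The plan is to verify each bound directly against the representation of Section~3.1: the buckets $A[u,\sigma]$ are balanced search trees keyed by child identifier, and the child-to-parent map $M$ is a hash table with $O(1)$ expected access. If $M$ is a balanced dictionary instead, each access adds an $O(\log n)$ term, which I would note as a caveat. Because $\Sigma$ has constant size, every per-vertex loop over states costs $O(1)$ bucket accesses. One point of interpretation matters throughout: for an upsert that moves $v$ from $u_0$ to $u$, ``the affected parent degree $d$'' should be read as $\max(d(u_0),d(u))$.

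\emph{Upsert and state update.} I would trace Algorithm~\ref{alg:upsert} step by step.
\begin{enumerate}[leftmargin=2em]
\item The lookup of $M[v]$ costs $O(1)$.
\item Deleting $v$ from $A[u_0,\sigma_0]$ costs $O(\log |A[u_0,\sigma_0]|)\le O(\log d(u_0))$.
\item Inserting $v$ into $A[u,\sigma]$ costs $O(\log d(u))$.
\item Overwriting $M[v]$ costs $O(1)$.
\end{enumerate}
Summing gives $O(\log d)$. The state update is the special case $u_0=u$, so it inherits the same bound with $d=d(u)$. For correctness I would check that the current-parent invariant is preserved. Before the operation, $v$ lies in at most one bucket, namely the one recorded in $M[v]$. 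The removal and insertion steps leave it in exactly $A[u,\sigma]$, and $M[v]$ now records exactly that bucket.

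\emph{Child listing.} For each $\sigma\in\Sigma_P$, I would perform an in-order traversal of $A[u,\sigma]$. Locating the start of the traversal costs $O(\log d)$, and emitting the $k_\sigma$ elements costs $O(k_\sigma)$. Summing over the constantly many states gives $O(k+\log d)$. If a globally sorted output is required, the $|\Sigma_P|=O(1)$ sorted streams can be merged in $O(k)$ time, so the bound is unchanged.

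\emph{Descendant enumeration.} This is the step I expect to be the main obstacle, because it needs an acyclicity argument. The BFS marks each vertex when it is enqueued. For every dequeued $x$, it scans the buckets $A[x,\sigma]$ for $\sigma\in\Sigma_P$ and enqueues the unmarked children.
\begin{enumerate}[leftmargin=2em]
\item By the current-parent invariant, each vertex $y$ has in-degree at most one. Hence $y$ can be discovered only through its single $P$-edge, and each reachable $P$-edge is examined exactly once. The number of bucket-element visits is therefore $m_P(u)$.
\item The per-vertex overhead consists of $O(|\Sigma_P|)=O(1)$ bucket openings. Each reachable vertex other than $u$ is the head of a distinct reachable edge, so this overhead is also $O(m_P(u)+1)$.
\item Marking and queue operations cost $O(1)$ each.
\end{enumerate}
The delicate issue is cycles. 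Upserts could move a vertex under one of its own descendants, and then $u$ could reappear as a descendant of itself. Marking handles this: every edge is still scanned at most once, so the bound survives even in the cyclic case. I would also verify that the output equals $\desc_P(u)$ by the usual BFS invariant, comparing it against the least-fixed-point definition. The phrase ``after the first reported vertex'' then only absorbs the $O(1)$ initialization and the first bucket opening, and need not carry an additive $O(\log d)$ term, because in-order iteration requires no search.
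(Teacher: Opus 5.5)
Your proposal is correct and follows the paper's own argument. For upsert and state update, both use a child-to-parent lookup in $O(1)$ expected or $O(\log n)$ time plus one deletion and one insertion in balanced buckets. Child listing is a scan of the accepted buckets plus lookup overhead, and descendant enumeration is a breadth-first search that queues each reported vertex once and scans each accepted outgoing edge once. Your charging of the per-vertex overhead to $m_P(u)+1$ and your treatment of cycles created by moves are refinements the paper leaves implicit.

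One small caution concerns only your side claim that the output equals $\desc_P(u)$, not the time bound. If $u$ is marked when first enqueued, the search never reports $u$ on a cycle through $u$, although $u\in\desc_P(u)$ under the least-fixed-point definition. So either mark $u$ only when it is discovered as a child, or report it without re-enqueuing when it is met again.
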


\begin{proof}
The child-to-parent map locates the previous edge for a child in expected
$O(1)$ time with hashing or $O(\log n)$ with a balanced map.  Removing from
and inserting into a balanced adjacency bucket costs $O(\log d)$.  Listing
children under a constant number of accepted states scans exactly the
reported bucket entries, plus lookup overhead.  Breadth-first descendant
enumeration places each reported vertex in the queue once and scans exactly
the accepted outgoing edges of reported vertices.  Thus the traversal is
linear in the reachable filtered subgraph.
\end{proof}

\begin{theorem}[Compaction bounds]
With precomputed or cached item costs, Algorithm~\ref{alg:compact} runs in
$O(q+1)$ time and $O(q+1)$ output space.  Without cached costs, it runs in
$O(q+1+\sum_{i=j}^{n} |h_i|)$ time, where $h_j$ is the boundary item.
\end{theorem}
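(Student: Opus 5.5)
The plan is to count the loop iterations of Algorithm~\ref{alg:compact} and charge each iteration either to a retained item or to the single stopping event. The backward loop visits items $h_n,h_{n-1},\ldots$ and, in each iteration, does exactly one of three things: it retains a whole item and continues, it retains a truncated boundary item and breaks, or it breaks immediately. A full-retention iteration contributes one of the $q$ retained items. At most one further iteration ends the loop, whether by truncation or by an immediate break. So the loop body runs at most $q+1$ times. The prepend of the summary and the final return add $O(1)$. With costs precomputed or cached, each iteration does $O(1)$ work: one cost lookup, a comparison, a subtraction, and a prepend. This needs one implementation choice. I will treat ``prepend to $R$'' as constant time, either by building $R$ in a deque or by appending to a list and reversing once at the end in $O(q)$. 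I will also treat \textsc{TruncateMiddle} as producing a single item, so that, like any other item, it adds $O(1)$ to the output count. The output is the summary plus at most $q$ items, which gives $O(q+1)$ output space in item units. One subtlety is whether output space should count the bytes of the retained payloads. I will state explicitly that retained items are copied by reference, or equivalently that space is counted in items. This matches the units of the time bound.

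For the uncached bound, the only change is that $c\gets\tokens(h_i.payload)$ must be computed from scratch. I will assume, as is standard for byte counting, for $\widehat{\tokens}$, and for linear-time tokenizers, that measuring a payload $x$ costs $O(|x|+1)$. The iterations that perform a measurement are exactly those for $h_n$ down to the boundary item $h_j$, the first item that fails to fit or at which the loop stops. Summing over them gives $O\bigl(\sum_{i=j}^{n}(|h_i|+1)\bigr)$, and since there are at most $q+1$ such indices this is $O(q+1+\sum_{i=j}^{n}|h_i|)$. The truncation step also costs something: aligning to UTF-8 boundaries and copying the retained fragments takes $O(|h_j|)$, which the $|h_j|$ term already covers. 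When the loop runs all the way to $h_1$ without stopping, I set $j=1$ and the same bound holds.

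I expect the main obstacle to be pinning down the cost model, not the counting itself. The time bound requires that a single tokenizer call be linear in payload length, and that truncation and prepending stay within the claimed budget. I will make these explicit hypotheses and note that Proposition~1 (cache noninterference) guarantees that cached and uncached runs select the same suffix. Once that holds, the two bounds differ only in the per-iteration measurement cost. If $\tokens$ is not linear in payload length, I would state the second bound with a generic per-payload measurement cost $\tau(|h_i|)$ in place of $|h_i|$.
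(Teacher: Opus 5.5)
Your proposal is correct and follows the paper's own route: the backward scan inspects the retained suffix plus at most one boundary item, each inspection costs $O(1)$ with cached costs, and without caching the measurements of $h_j,\ldots,h_n$ contribute the $\sum_{i=j}^{n}|h_i|$ term. Beyond the paper, you make its implicit cost model explicit: constant-time prepend via a deque or a final reversal, output space counted in items, and linear-time measurement, plus the point that in the cached case the truncation step must itself run in $O(1)$ for the $O(q+1)$ bound to hold.
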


\begin{proof}
The algorithm scans retained suffix items and at most one boundary item.
With cached costs, each inspected cost is $O(1)$.  The output consists of
the summary plus retained items and possibly one truncated boundary item.
Without cached costs, the tokenizer or byte counter must inspect each
selected payload and the boundary payload, giving the stated additional
measurement term.
\end{proof}

\begin{proposition}[Space bound]
The combined structure uses $O(n+m+L+C)$ space, where $C$ is the capacity of
the auxiliary cost cache.
\end{proposition}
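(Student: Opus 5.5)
The plan is to bound each component of the combined structure separately and then add the bounds. The components are the graph (adjacency buckets $A[u,\sigma]$ and the child-to-parent map $M$), the history, the bounded cost cache, and the constant-size auxiliary objects such as the compaction window. Since these are stored independently, the total space is the sum of their individual footprints. No amortization or interaction argument is needed, provided no component's size depends on another beyond what we account for.

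For the graph, I will use the current-parent invariant. Each non-root vertex $v$ occurs in at most one bucket entry, namely $A[p(v),\sigma]$ for its unique current state. It also has at most one entry $M[v]$. The total number of bucket entries is therefore exactly $m\le n-1$. The balanced dictionaries have overhead linear in their contents. The only remaining cost is the empty-bucket headers: at most $|\Sigma|$ per parent, which is $O(n)$ because $\Sigma$ has constant size. Algorithm~\ref{alg:upsert} removes $v$ from the old bucket before inserting it into the new one, so moves never leave stale entries. This keeps the count at $m$ rather than the number of operations ever performed. The graph part is therefore $O(n+m)$.

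For the history, a growable array storing $L$ item records uses $O(L)$ cells, with at most a constant-factor slack from doubling. The main obstacle is interpreting what ``space'' counts. Payload strings have arbitrary length, so I will state explicitly that $O(L)$ counts item records (identifier plus a payload handle) in the word model. If one wants bytes, the term becomes $O(L+\sum_i|h_i|)$; I will note this reading in a remark rather than hide it. After compaction, the replacement history has $q+1\le L+1$ items by the compaction bound, so $L$ remains a valid bound on history length at all times.

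For the cache, the capacity bound gives at most $C$ entries, each holding a key hash, a mode tag and an integer cost. With LRU bookkeeping (a list plus a hash map), this is $O(C)$. The compaction window and budget policy use $O(1)$ space. Summing the four parts gives $O(n+m+L+C)$.
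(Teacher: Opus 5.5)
Your proposal is correct and follows the paper's proof. The paper likewise sums one child-to-parent entry per child and one adjacency entry per edge for the graph, $L$ item records for the history, $O(1)$ for the compaction window, and at most $C$ entries for the cost cache. Your extra details are sound refinements of the same argument: no stale bucket entries after moves, $O(|\Sigma|)$ bucket headers per parent, and the explicit word-model reading of payload storage, which the paper only mentions as ``plus payload storage''.
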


\begin{proof}
The graph stores one current-parent map entry per child and one adjacency
entry per edge.  The history stores $L$ item records plus payload storage.
The compaction window is constant size.  The cost cache is explicitly
bounded by $C$ entries.
\end{proof}

\begin{table}[H]
\centering
\caption{Asymptotic behavior of the abstract \BDTS{} interface.}
\label{tab:complexity}
\begin{tabular}{lll}
\toprule
Operation & Time & Additional space \\
\midrule
Edge upsert & $O(\log d)$ & $O(1)$ \\
State update & $O(\log d)$ & $O(1)$ \\
Direct child listing & $O(k+\log d)$ & $O(k)$ output \\
Filtered descendants & $O(m_P(u)+1)$ & $O(w)$ queue \\
History append & $O(1)$ amortized & $O(1)$ plus payload \\
History page & $O(p)$ after cursor resolution & $O(p)$ output \\
Budgeted compaction & $O(q+1)$ with cached costs & $O(q+1)$ output \\
Soft-cap trim & $O(t)$ for $t$ removed items & $O(1)$ \\
Observation register & $O(k\log k)$ map version & $O(k)$ \\
Notification projection & $O(s)$ map version, $O(|y|+z)$ trie version & $O(z)$ output \\
Delta overlay update & $O(\log r)$ with ordered maps & $O(1)$ new records \\
Cost-cache lookup & $O(1)$ expected & bounded by $C$ \\
\bottomrule
\end{tabular}
\end{table}

\subsection{Restricted Reachability Versus General Dynamic Reachability}

The current-parent invariant is the reason \BDTS{} avoids the hardest cases
of general dynamic reachability.  General directed reachability under
arbitrary insertions and deletions motivates sophisticated algebraic and
combinatorial techniques \cite{sankowski2004dynamic,roditty2008dynamic}.
\BDTS{} instead stores a rooted trace relation whose updates are local moves
or state changes.  Descendant enumeration remains output-sensitive because
the operation asks for the reachable set itself, not merely a Boolean
answer.  This is appropriate for trace maintenance: if a query returns
10,000 descendants, writing the answer already costs $\Omega(10,000)$.

\subsection{Budget Lower Bound}

No compaction scheme that returns an explicit suffix can avoid reading the
boundary region in the worst case.  Consider a history whose last $q$ items
all fit exactly and whose $(q+1)$st item exceeds the remaining budget by one
unit.  Any algorithm that claims to return the longest suffix must
distinguish this instance from one in which the $(q+1)$st item fits.  That
distinction requires knowing the cost of the boundary item.  Thus the
cached-cost $O(q+1)$ bound is optimal up to dictionary overhead for exact
suffix maximality.

\section{Implementation}

The ancillary artifact contains a Rust crate named
\texttt{budgeted-trace-structures}.  Its public types mirror the mathematical
objects grouped in Table~\ref{tab:artifact}: graph reachability, bounded
history, compaction windows, budget policies, soft-capped logs, observation
registries, delta overlays, cost caches, and benchmark drivers.  The
implementation uses deterministic sorted vectors for adjacency lists in
order to keep outputs stable in tests and benchmark files.  This choice
keeps the artifact compact and readable.  Replacing each adjacency vector by
a balanced set gives the asymptotic update bounds in
Table~\ref{tab:complexity} without changing the interface.

\begin{table}[H]
\centering
\caption{Artifact components and their role.}
\label{tab:artifact}
\begin{tabular}{lll}
\toprule
Component & Public object & Purpose \\
\midrule
Trace graph & \texttt{TraceGraph} & status-filtered child and descendant queries \\
History & \texttt{BudgetedHistory} & append-only payload sequence \\
Budget policy & \texttt{BudgetPolicy} & byte/token accounting interface \\
Window state & \texttt{CompactionWindow} & compacted epoch and prefill estimate \\
Auxiliary cache & \texttt{BoundedCostCache} & bounded recomputable cost estimates \\
Soft log & \texttt{SoftCappedLog} & hard/soft bounded durable recency \\
Observation registry & \texttt{ObservationRegistry} & reference-counted exact/recursive keys \\
Delta overlay & \texttt{DeltaOverlay} & exact key-level change aggregation \\
Benchmark & \texttt{TraceBench} & synthetic measurement driver \\
\bottomrule
\end{tabular}
\end{table}

\subsection{Engineering Choices}

The prototype makes three conservative choices.  First, graph identifiers
are integer trace identifiers, avoiding allocation-heavy keys inside the
hot path.  Second, cost accounting is explicit rather than hidden in the
history.  This makes it possible to run the same trace under byte budgets,
approximate token budgets, or exact tokenizer measurements.  Third, the
benchmark driver emits JSON and CSV.  JSON preserves structured metadata;
CSV is convenient for plots and tables.

The Python measurement script in the artifact loads
\texttt{distilbert/distilgpt2}, \texttt{gpt2}, and
\texttt{facebook/opt-125m}.  The first two use the GPT-2 byte-level BPE
family \cite{radford2019language,hfDistilgpt2}; the third is from the OPT
family \cite{zhang2022opt}.  The script uses these models only as concrete
tokenizer and forward computation targets.  The measured quantity is
representation cost, not output quality.

\section{Experiments}

\subsection{Questions}

The experiments answer three questions.

\begin{enumerate}[leftmargin=2em]
\item How quickly does the prototype build and query a status-filtered trace
graph with tens of thousands of vertices?
\item How much can a budgeted summary-plus-suffix representation reduce a
long synthetic history?
\item Does the reduction remain visible under an actual byte-level BPE
tokenizer and a small forward computation?
\end{enumerate}

\subsection{Synthetic Trace Matrix}

The synthetic benchmark matrix constructs rooted traces with 10,000, 20,000,
and 40,000 vertices.  The three workloads vary branching factor, state
period, payload length, and budget.  Each workload assigns closed state to
children whose identifiers are divisible by a workload-specific period and
active state to all other edges.  The history contains one payload per
vertex, and compaction retains a summary plus the largest suffix fitting the
approximate token budget.

\begin{table}[H]
\centering
\caption{Measured Rust benchmark matrix.  Times are milliseconds.}
\label{tab:synthetic}
\resizebox{\textwidth}{!}{%
\begin{tabular}{lrrrrrrrr}
\toprule
Workload & Vertices & Edges & Active desc. & All desc. & Build & Active query & Full query & Compact \\
\midrule
balanced\_10k & 10,000 & 9,999 & 3,624 & 9,999 & 0.582 & 0.107 & 0.243 & 0.0056 \\
wide\_20k & 20,000 & 19,999 & 7,854 & 19,999 & 1.169 & 0.255 & 0.589 & 0.0091 \\
deep\_40k & 40,000 & 39,999 & 10,954 & 39,999 & 2.722 & 0.689 & 1.415 & 0.0390 \\
\bottomrule
\end{tabular}
}
\end{table}

\begin{table}[H]
\centering
\caption{Budget and auxiliary-structure outcomes from the same Rust matrix.}
\label{tab:budget-matrix}
\resizebox{\textwidth}{!}{%
\begin{tabular}{lrrrrrr}
\toprule
Workload & Original tok. & Compact tok. & Ratio & Soft-log entries & Soft-log bytes & Registry time \\
\midrule
balanced\_10k & 350,000 & 1,048 & 0.002994 & 110 & 15,400 & 0.00158 ms \\
wide\_20k & 1,030,000 & 2,072 & 0.002012 & 116 & 23,780 & 0.00238 ms \\
deep\_40k & 2,710,000 & 4,120 & 0.001520 & 100 & 26,900 & 0.00454 ms \\
\bottomrule
\end{tabular}
}
\end{table}

Tables~\ref{tab:synthetic} and \ref{tab:budget-matrix} show three effects.
First, graph construction and traversal scale with the output size, as
predicted by the analysis.  The 40,000-vertex workload enumerates 39,999
descendants in 1.415 ms and the active filtered subset in 0.689 ms.  Second,
summary-plus-suffix compaction reduces histories by two to three orders of
magnitude under these budgets.  Third, the soft-capped log and observation
registry remain small auxiliary structures: the log retains roughly one
hundred recent entries, and registry projection is below 0.005 ms in all
three workloads.

\subsection{Tokenizer and Forward Matrix}

The model-budget measurement builds two strings for each model.  The raw
string contains 160 event lines with 112-byte payloads.  The compact string
contains a summary line and 20 retained event lines.  Both strings are
tokenized.  The compact string is also passed through a forward computation
over a 256-token window and a deterministic eight-token generation step over
a 128-token window.  The raw string is used only for token counting when it
exceeds a model's nominal context length.

\begin{table}[H]
\centering
\caption{Measured tokenizer and forward matrix.  Times are milliseconds.}
\label{tab:hf}
\resizebox{\textwidth}{!}{%
\begin{tabular}{lrrrrrrr}
\toprule
Model & Context & Raw tok. & Compact tok. & Ratio & Load & Forward & Generate \\
\midrule
\texttt{distilbert/distilgpt2} & 1,024 & 3,359 & 432 & 0.12861 & 1,728.0 & 66.8 & 50.9 \\
\texttt{gpt2} & 1,024 & 3,359 & 432 & 0.12861 & 1,904.5 & 113.6 & 85.6 \\
\texttt{facebook/opt-125m} & 2,048 & 3,360 & 433 & 0.12887 & 1,975.3 & 70.9 & 87.7 \\
\bottomrule
\end{tabular}
}
\end{table}

The token ratio in Table~\ref{tab:hf} is stable across two GPT-2-family
tokenizers and the OPT tokenizer: the compact representation uses about
12.9\% of the raw tokens.  The load time varies substantially because the
measurement includes local model initialization and first-use cache effects.
The forward and generation columns document that the compact representation
is accepted by each concrete computation target.  The OPT measurement is
included to avoid relying on a single tokenizer family and to test a larger
context limit.

\subsection{Reproducibility}

The artifact records the exact commands in its README.  The aggregate result
files are:
\begin{itemize}[leftmargin=2em]
\item \texttt{anc/results/tracebench\_matrix.json} and
\texttt{anc/results/tracebench\_matrix.csv};
\item \texttt{anc/results/model\_matrix.json} and
\texttt{anc/results/model\_matrix.csv}.
\end{itemize}
The manuscript tables were copied from those generated files.  The arXiv
source bundle places the ancillary files under the root-level \texttt{anc/}
directory, following the arXiv ancillary-file convention for code and data
material.

\section{Discussion}

\subsection{Why Summary Plus Suffix}

Pure compression and pure suffix retention solve different problems.
Compression can preserve many bytes but may obscure the most recent items.
Suffix retention preserves recency but forgets all older material.  \BDTS
uses a hybrid: a summary item stands for the prefix, while the suffix keeps
recent trace items explicit.  This is a data-structural compromise.  It does
not require the summary to be perfect; it only requires the replacement
history to make the boundary visible and enforce the budget.

\subsection{State-Filtered Reachability}

The edge-state predicate is a small addition with large practical effect.
Many trace systems need to distinguish currently active branches from
closed, interrupted, or archived branches.  Encoding that status in the
edge relation allows queries such as ``give the active descendants of this
root'' without scanning unrelated material.  The model also supports future
states by treating states as strings or small enums.  If the number of
states grows, buckets can be indexed by a dictionary rather than an array.

\subsection{Why Reference Counts Belong in the Structure}

Observation registration can be implemented outside the trace structure, but
placing it beside the trace has two advantages.  First, the registry shares
the same namespace as the graph and history, so recursive observation of a
subtree can be interpreted consistently with descendant queries.  Second,
reference counts make reconfiguration proportional to changes in effective
mode rather than proportional to the number of subscribers.  If one hundred
subscribers observe the same recursive key, the underlying source needs one
recursive registration, not one hundred.

The exact/recursive dominance rule is intentionally simple.  Recursive
mode dominates exact mode for the same key because every exact event is also
reported by recursive observation.  For different keys, no dominance is
assumed unless the namespace relation implies it.  This keeps the registry
local: it does not need to solve a global set-cover problem over observed
regions.  A trie implementation can reduce projection time while preserving
the same semantics.

\subsection{Soft Caps Versus Sliding Windows}

A sliding window usually retains the most recent items whose total cost is
below a threshold.  A soft-capped log is slightly different: it permits the
log to grow up to a hard cap, then trims below that cap.  The gap between
the hard cap and the soft target creates hysteresis.  Hysteresis is useful
when trimming has a fixed overhead, for example when rewriting a line-based
file or updating a sparse offset index.  The resulting structure is still
bounded by $M$ after enforcement, but it avoids turning every later append
into a trim operation.

\subsection{Delta Overlays and Compaction Summaries}

Delta overlays are a bridge between exact histories and compact summaries.
They preserve the final key-level effect of a window without preserving the
full order of operations.  This makes them unsuitable for exact replay, but
well suited for summary construction, conflict detection, and change
headers.  A compaction routine can include a delta overlay in the summary
payload, making the summary more structured than free text while keeping
the suffix budget unchanged.

\subsection{Approximate Versus Exact Tokens}

Approximate token counts are useful inside hot paths because they are
cheap, deterministic, and conservative enough for engineering decisions.
Exact tokenizer counts are useful for final accounting and experiments.
The separation of \texttt{BudgetPolicy} from \texttt{BudgetedHistory} is
therefore important: the same history can be compacted under a fast
estimate or measured with an external tokenizer after the fact.  Subword
tokenization methods such as BPE \cite{gage1994bpe,sennrich2016bpe} and
SentencePiece \cite{kudo2018sentencepiece} make this distinction explicit.

\subsection{Relation to Succinctness}

\BDTS{} is not a succinct data structure in the strict static sense of
Jacobson-style encodings \cite{jacobson1989succinct,navarro2016compact}.
It is a dynamic representation-budget structure.  The objective is not to
approach an information-theoretic lower bound for a fixed tree, but to
preserve useful dynamic operations under an explicit representation limit.
Nevertheless, succinct techniques suggest future encodings for cold trace
segments.  A completed prefix could be stored in a compressed tree sequence
while the active suffix remains in mutable form.

\section{Related Work}

\subsection{Dynamic Graph Algorithms}
The graph component descends from classical traversal and dynamic
reachability.  Tarjan's depth-first search gives the linear-time baseline
for static graph analysis \cite{tarjan1972dfs}.  Even and Shiloach studied
online deletion in graph problems \cite{even1981online}.  Henzinger and
King developed randomized fully dynamic graph algorithms with
polylogarithmic update time for several graph properties
\cite{henzinger1999randomized}.  Dynamic transitive closure and reachability
have also been attacked through matrix and combinatorial methods
\cite{sankowski2004dynamic,roditty2008dynamic}.  \BDTS{} is more restricted:
it focuses on rooted output-sensitive enumeration under a current-parent
trace invariant.

\subsection{Static-to-Dynamic Transformations and Search}
Bentley and Saxe's decomposable searching framework shows how static
structures can be lifted into dynamic settings \cite{bentley1980decomposable}.
The \BDTS{} graph can be seen as the opposite design point: it begins with a
restricted dynamic interface and then asks which static summaries may be
attached to inactive regions.  Efficient string matching, such as the
Aho--Corasick automaton \cite{aho1975string}, is relevant when payloads
must be searched after compaction, but payload search is orthogonal to the
trace graph.

\subsection{Succinct and Compact Structures}
Succinct tree and dictionary representations \cite{jacobson1989succinct,raman2002succinct}
and broader compact-data-structure treatments \cite{navarro2016compact}
are relevant for storing old trace regions.  Their main target is static or
semi-static representation.  \BDTS{} instead keeps the active region mutable
and budgeted, but it can use succinct encodings for frozen prefixes.

\subsection{Streaming and Windowed Summaries}
Streaming algorithms maintain compact summaries of large input sequences
\cite{muthukrishnan2005streams}.  Sliding-window statistics
\cite{datar2002sliding} and sketches such as Count-Min
\cite{cormode2005countmin} show how approximate answers can be kept under
space constraints.  \BDTS{} differs by preserving an explicit suffix and a
valid replacement history.  The compacted result is not only a statistic;
it remains a sequence of trace items.

\subsection{Compression and Tokenization}
Universal compression algorithms \cite{ziv1977universal,ziv1978compression}
and dictionary methods such as LZW \cite{welch1984technique} reduce byte
representations.  BPE \cite{gage1994bpe} and its use in subword processing
\cite{sennrich2016bpe,kudo2018sentencepiece} motivate token-budget
accounting.  Distilled transformer models
\cite{sanh2019distilbert,hinton2015distilling}, GPT-2-family models
\cite{radford2019language}, OPT \cite{zhang2022opt}, and the
\texttt{distilbert/distilgpt2} card \cite{hfDistilgpt2} provide concrete
tokenizer targets for the experiments.

\section{Limitations}

The current formulation uses a single current parent per trace vertex.
This is appropriate for rooted trace branching, but it does not express a
general directed graph with arbitrary joins.  A future version could allow
multiple parents by replacing the child-to-parent map with a child-to-edge
set, at the cost of more expensive movement and duplicate-control logic.

The compaction summary is treated as an input to the data structure rather
than derived by a fixed semantic algorithm.  This is deliberate: summary
quality is application-dependent, while the data-structure problem is to
place, budget, and preserve the summary-plus-suffix replacement.  For tasks
requiring exact replay, compaction should be paired with a lossless archive
of discarded items.

The benchmark is intentionally small enough to be reproducible on a local
workstation.  It does not claim to characterize all hardware, all
tokenizers, or all trace distributions.  Its purpose is to verify that the
artifact performs the operations described in the analysis and to provide a
concrete measured table rather than an illustrative estimate.

Finally, approximate token accounting can undercount or overcount for
languages and payload formats far from the English-like byte distribution
implicit in the four-byte rule.  Exact tokenizer measurements should be used
when a hard external limit must be satisfied.

\section{Conclusion}

\BDTS{} frames token-efficient sequential computation as a data-structures
problem.  The structure maintains a status-filtered rooted trace graph, an
append-only history, a soft-capped log, a reference-counted observation
registry, a delta overlay, a byte/token budget interface, and a compaction
window that replaces old material by a summary plus explicit suffix.  The
resulting operations have simple output-sensitive bounds and a compact
implementation.  The measured artifact demonstrates that 10,000--40,000
vertex traces can be built, queried, observed, paged, and compacted quickly,
and that the representation reduction is visible across multiple concrete
tokenizer and forward-computation targets.  The broader lesson is that
representation budgets should be first-class in dynamic trace structures:
when a computation must carry its history forward, the history needs the
same algorithmic care as the computation itself.

\appendix

\section{Proof Details}

\subsection{Deterministic Breadth-First Order}

If each adjacency bucket is sorted by child identifier, breadth-first
enumeration is deterministic.  Let $Q$ be the queue after processing $t$
vertices.  The next vertex is the earliest enqueued unprocessed child among
all vertices at the current frontier distance.  Sorting buckets fixes the
within-parent child order, and queue discipline fixes the between-parent
order.  By induction on frontier distance, the output is uniquely
determined by the graph and predicate.

\subsection{Replacement Validity}

A replacement history is valid if it is a finite sequence of trace items
whose payloads are valid strings and whose first item is a summary item.
Algorithm~\ref{alg:compact} returns a valid replacement history because it
constructs a finite list, copies valid payloads from the original history,
and applies boundary-safe truncation to at most one payload.  The summary
payload is supplied as a valid string.  Thus every output payload is valid.

\subsection{Budget Monotonicity}

Let $R(B)$ be the retained suffix selected by Algorithm~\ref{alg:compact}
under budget $B$, ignoring the summary item.  If $B_1\le B_2$, then
$R(B_1)$ is a suffix of $R(B_2)$, except that the boundary item under
$B_1$ may be a shorter truncation of the corresponding boundary item under
$B_2$.  This follows because the backward scan under $B_2$ has at least as
much remaining budget at every step as the scan under $B_1$.

\section{Additional Implementation Notes}

\subsection{Why the Artifact Uses Sorted Vectors}

The artifact favors transparency.  A sorted vector adjacency list is easy
to inspect, serializes naturally, and gives deterministic output without an
additional ordering layer.  For the benchmark size used in the paper, the
constant factors are small.  For workloads with frequent movement under
high-degree parents, a balanced-set bucket is the direct replacement.

\subsection{CSV and JSON Outputs}

The benchmarks write both structured JSON and CSV files.  JSON is used by
automated checks because field names reduce ambiguity.  CSV is used for
manuscript tables because it can be imported directly into spreadsheet or
plotting tools.  The matrix drivers generate one per-workload JSON file and
one aggregate matrix file.

\section{Extended Example: End-to-End Maintenance}

This section gives a complete operational example that combines the
components analyzed above.  Consider a sequential computation that begins
at root $r=0$ and emits a new trace vertex whenever it reaches a durable
state.  Each vertex stores only an identifier in the graph; the larger
payload is appended to the history.  Suppose vertices $1,2,3$ branch from
the root, vertex $4$ branches from $1$, and vertex $5$ branches from $4$.
If vertex $2$ is later closed, the edge $(0,2)$ remains in the graph but its
state changes from $\Active$ to $\Closed$.  A query for active descendants
of $0$ now reports $1,3,4,5$ and omits $2$.  A query without the state
predicate still reports all five descendants.  The update has not rewritten
history items; it has changed only the structural index.

At the same time, the history receives payloads
$h_1,\ldots,h_5$.  A page request with page size two returns
$\langle h_1,h_2\rangle$ and a cursor for the third item.  A later page
request resumes from that cursor.  Because the history is append-only within
the epoch, existing cursors remain meaningful until compaction creates a new
epoch.  This separation is useful: graph queries can be answered in
structural order, while pages are answered in temporal order.

Now suppose a client observes the key \texttt{root} recursively and another
client observes \texttt{root/branch/4} exactly.  The registry stores one
recursive count for \texttt{root} and one exact count for
\texttt{root/branch/4}.  A change at \texttt{root/branch/4/value} is
projected to the recursive observer but not the exact observer.  A change
at \texttt{root/branch/4} is projected to both.  If several clients observe
the same recursive key, the registry increments a counter rather than
duplicating the effective observation mode.  Removing a client decrements
the counter and changes the effective mode only when the counter reaches
zero.

During this window, a delta overlay records the exact key-level effect of
payload updates.  If key $a$ changes from $x$ to $y$, the overlay records
baseline value $x$ and current value $y$.  If $a$ is then moved to $b$ and
updated to $z$, the overlay records the rename relation $(a,b)$ and the
current value at $b$.  The ordered history still contains both operations;
the overlay records the final window effect.  This distinction allows a
summary to contain a small exact change header while the retained suffix
keeps the most recent ordered items.

Finally, suppose the representation budget is $B=512$ tokens and the
history has grown to many thousands of items.  Compaction scans backward
from the newest item, inserts whole items while their costs fit, and
truncates one boundary item if needed.  The replacement history begins with
a summary item that can include the current epoch number, the number of
discarded items, and the delta overlay's changed keys.  The selected suffix
then follows verbatim.  The graph is not rebuilt by this operation: the
graph continues to answer structural queries, while the history epoch has
changed to a summary-plus-suffix representation.

\begin{proposition}[Composed batch cost]
Consider a batch with $a$ graph upserts, $s$ state updates, $h$ history
appends, $o$ observation registrations after deduplication, and one
compaction retaining $q$ items.  With balanced adjacency buckets, cached
payload costs, and map-based observation counters, the batch takes
\[
  O((a+s)\log d_{\max} + h + o\log o + q)
\]
time plus output size, where $d_{\max}$ is the maximum affected parent
degree in the batch.
\end{proposition}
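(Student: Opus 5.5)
The bound follows by charging each operation in the batch separately and summing, using the per-operation results already established. Running time is additive over a sequence of operations, and none of the bounds used below depends on the order in which the operations are interleaved, provided each is stated in terms of the maximum degree reached during the batch. So it suffices to bound each class and add.

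\emph{Graph updates.} By the graph operation bounds theorem, each upsert and each state update costs $O(\log d)$, where $d$ is the degree of the affected parent at the time of the operation. By definition of $d_{\max}$, every such $d$ is at most $d_{\max}$. The lookup in the child-to-parent map $M$ adds $O(1)$ expected time, which is absorbed. The $a+s$ graph updates therefore contribute $O((a+s)\log d_{\max})$.

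\emph{Appends.} Each history append is $O(1)$ amortized with a growable array, as in the History Storage subsection. This contributes $O(h)$, because amortized bounds sum correctly over any sequence that starts from a valid array state.

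\emph{Registrations.} Algorithm~\ref{alg:registry} sorts and deduplicates its input in $O(o\log o)$. The counter updates and effective-mode recomputations then cost $O(\log o)$ each in an ordered map, or $O(1)$ expected with hashing. The total is $O(o\log o)$, matching the table entry. One subtlety needs a sentence: the membership check ``already registered by $s$'' must also be $O(\log o)$. This holds if each subscriber's registered set is kept in an ordered map. If several subscribers appear in the batch, sum over them and use concavity: $\sum o_i\log o_i\le o\log o$.

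\emph{Compaction.} The compaction bounds theorem gives $O(q+1)$ with cached costs. The output of size $O(q+1)$ is counted separately as ``output size'', and the additive $1$ is absorbed by the single compaction.

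Summing the four contributions gives the stated expression. The main point to verify is that the costs do not interact. Compaction does not touch the graph, and appends do not change adjacency degrees, so $d_{\max}$ correctly bounds every graph operation. The cache noninterference proposition shows that cached costs remain valid regardless of eviction, so the $O(q+1)$ compaction bound holds whenever every needed cost is present. If some entries have been evicted, recomputing them adds a measurement term, which is excluded by the hypothesis ``cached payload costs.''
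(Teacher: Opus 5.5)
Your proposal is correct and matches the paper's proof, which likewise bounds graph updates, appends, sort-and-deduplicate registration, and cached-cost compaction separately, then adds the costs because the operations touch disjoint components. Two side remarks are slightly off but harmless: a batch that starts mid-stream can trigger a $\Theta(L)$ array resize, so the $O(h)$ append term holds only in the amortized sense the paper intends, not for every batch ``starting from a valid array state''; and ordered-map lookups in the registry cost logarithmically in the registry's size rather than in $o$, so your $O(o\log o)$ total actually relies on the hashing option you mention.
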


\begin{proof}
Each upsert or state update touches a child-to-parent entry and at most two
adjacency buckets, costing $O(\log d_{\max})$ under balanced buckets.  Each
append is amortized $O(1)$.  The registration list is sorted and
deduplicated in $O(o\log o)$ time before counters are updated.  Compaction
with cached costs inspects the retained suffix and one possible boundary
item, giving $O(q)$ time.  The operations touch disjoint structural
components except for the identifier namespace, so their costs add.
\end{proof}

\begin{table}[H]
\centering
\caption{End-to-end maintenance roles in one trace epoch.}
\label{tab:end-to-end}
\begin{tabular}{lll}
\toprule
Object & Maintained information & Main query or update \\
\midrule
Trace graph & current rooted parent relation & filtered descendants \\
History & ordered payload sequence & page and compact \\
Soft log & bounded newest raw entries & append with hysteresis \\
Observation registry & counted key subscriptions & projected notification set \\
Delta overlay & baseline/current key differences & compact change header \\
Cost cache & recomputable payload costs & repeated budget accounting \\
Compaction window & epoch and prefill estimate & replacement boundary \\
\bottomrule
\end{tabular}
\end{table}

The example highlights why \BDTS{} is not just a graph, a log, or a
compressor.  The graph supplies rooted reachability; the history supplies
ordered representation; the registry and overlay supply small auxiliary
summaries; the soft log supplies bounded recent durability; and the
compaction window states where one bounded replacement view ends and the
next begins.  Each object has a narrow invariant, but the combination gives
a trace representation that can be queried, transmitted, and shortened
without changing its structural vocabulary.

\section{Artifact Checklist}

The ancillary artifact contains:
\begin{itemize}[leftmargin=2em]
\item a Rust library with the trace graph, budgeted history, compaction
window, budget policy, bounded cost cache, soft-capped log, observation
registry, and delta overlay;
\item a Rust benchmark binary and a matrix runner that emit JSON and CSV;
\item Python scripts that run tokenizer and compact forward measurements
for the three reported model targets;
\item generated result files used in Tables~\ref{tab:synthetic},
\ref{tab:budget-matrix}, and \ref{tab:hf};
\item a README with commands for reproducing the measurements.
\end{itemize}

\bibliographystyle{plainnat}
\bibliography{references}

@article{tarjan1972dfs,
  author = {Tarjan, Robert E.},
  title = {Depth-First Search and Linear Graph Algorithms},
  journal = {SIAM Journal on Computing},
  volume = {1},
  number = {2},
  pages = {146--160},
  year = {1972},
  doi = {10.1137/0201010}
}

@article{even1981online,
  author = {Even, Shimon and Shiloach, Yossi},
  title = {An On-Line Edge-Deletion Problem},
  journal = {Journal of the ACM},
  volume = {28},
  number = {1},
  pages = {1--4},
  year = {1981},
  doi = {10.1145/322234.322235}
}

@inproceedings{henzinger1999randomized,
  author = {Henzinger, Monika Rauch and King, Valerie},
  title = {Randomized Fully Dynamic Graph Algorithms with Polylogarithmic Time per Operation},
  booktitle = {Proceedings of the Thirty-First Annual ACM Symposium on Theory of Computing},
  pages = {519--527},
  year = {1999},
  doi = {10.1145/301250.301383}
}

@article{demetrescu2004dynamic,
  author = {Demetrescu, Camil and Italiano, Giuseppe F.},
  title = {A New Approach to Dynamic All Pairs Shortest Paths},
  journal = {Journal of the ACM},
  volume = {51},
  number = {6},
  pages = {968--992},
  year = {2004},
  doi = {10.1145/1039488.1039492}
}

@inproceedings{sankowski2004dynamic,
  author = {Sankowski, Piotr},
  title = {Dynamic Transitive Closure via Dynamic Matrix Inverse},
  booktitle = {45th Annual IEEE Symposium on Foundations of Computer Science},
  pages = {509--517},
  year = {2004},
  doi = {10.1109/FOCS.2004.25}
}

@article{roditty2008dynamic,
  author = {Roditty, Liam and Zwick, Uri},
  title = {Improved Dynamic Reachability Algorithms for Directed Graphs},
  journal = {SIAM Journal on Computing},
  volume = {37},
  number = {5},
  pages = {1455--1471},
  year = {2008},
  doi = {10.1137/060650271}
}

@inproceedings{jacobson1989succinct,
  author = {Jacobson, Guy},
  title = {Space-Efficient Static Trees and Graphs},
  booktitle = {30th Annual Symposium on Foundations of Computer Science},
  pages = {549--554},
  year = {1989},
  doi = {10.1109/SFCS.1989.63533}
}

@inproceedings{raman2002succinct,
  author = {Raman, Rajeev and Raman, Venkatesh and Rao, S. Srinivasa},
  title = {Succinct Indexable Dictionaries with Applications to Encoding k-ary Trees and Multisets},
  booktitle = {Proceedings of the Thirteenth Annual ACM-SIAM Symposium on Discrete Algorithms},
  pages = {233--242},
  year = {2002}
}

@book{navarro2016compact,
  author = {Navarro, Gonzalo},
  title = {Compact Data Structures: A Practical Approach},
  publisher = {Cambridge University Press},
  year = {2016},
  doi = {10.1017/CBO9781316588284}
}

@article{sleator1985amortized,
  author = {Sleator, Daniel D. and Tarjan, Robert E.},
  title = {Amortized Efficiency of List Update and Paging Rules},
  journal = {Communications of the ACM},
  volume = {28},
  number = {2},
  pages = {202--208},
  year = {1985},
  doi = {10.1145/2786.2793}
}

@inproceedings{megiddo2003arc,
  author = {Megiddo, Nimrod and Modha, Dharmendra S.},
  title = {ARC: A Self-Tuning, Low Overhead Replacement Cache},
  booktitle = {Proceedings of the 2nd USENIX Conference on File and Storage Technologies},
  pages = {115--130},
  year = {2003}
}

@book{muthukrishnan2005streams,
  author = {Muthukrishnan, S.},
  title = {Data Streams: Algorithms and Applications},
  publisher = {Now Publishers},
  year = {2005},
  doi = {10.1561/0400000002}
}

@article{datar2002sliding,
  author = {Datar, Mayur and Gionis, Aristides and Indyk, Piotr and Motwani, Rajeev},
  title = {Maintaining Stream Statistics over Sliding Windows},
  journal = {SIAM Journal on Computing},
  volume = {31},
  number = {6},
  pages = {1794--1813},
  year = {2002},
  doi = {10.1137/S0097539701398363}
}

@article{cormode2005countmin,
  author = {Cormode, Graham and Muthukrishnan, S.},
  title = {An Improved Data Stream Summary: The Count-Min Sketch and Its Applications},
  journal = {Journal of Algorithms},
  volume = {55},
  number = {1},
  pages = {58--75},
  year = {2005},
  doi = {10.1016/j.jalgor.2003.12.001}
}

@article{ziv1977universal,
  author = {Ziv, Jacob and Lempel, Abraham},
  title = {A Universal Algorithm for Sequential Data Compression},
  journal = {IEEE Transactions on Information Theory},
  volume = {23},
  number = {3},
  pages = {337--343},
  year = {1977},
  doi = {10.1109/TIT.1977.1055714}
}

@article{ziv1978compression,
  author = {Ziv, Jacob and Lempel, Abraham},
  title = {Compression of Individual Sequences via Variable-Rate Coding},
  journal = {IEEE Transactions on Information Theory},
  volume = {24},
  number = {5},
  pages = {530--536},
  year = {1978},
  doi = {10.1109/TIT.1978.1055934}
}

@article{welch1984technique,
  author = {Welch, Terry A.},
  title = {A Technique for High-Performance Data Compression},
  journal = {Computer},
  volume = {17},
  number = {6},
  pages = {8--19},
  year = {1984},
  doi = {10.1109/MC.1984.1659158}
}

@article{gage1994bpe,
  author = {Gage, Philip},
  title = {A New Algorithm for Data Compression},
  journal = {C Users Journal},
  volume = {12},
  number = {2},
  pages = {23--38},
  year = {1994}
}

@inproceedings{sennrich2016bpe,
  author = {Sennrich, Rico and Haddow, Barry and Birch, Alexandra},
  title = {Neural Machine Translation of Rare Words with Subword Units},
  booktitle = {Proceedings of the 54th Annual Meeting of the Association for Computational Linguistics},
  pages = {1715--1725},
  year = {2016},
  doi = {10.18653/v1/P16-1162}
}

@inproceedings{kudo2018sentencepiece,
  author = {Kudo, Taku and Richardson, John},
  title = {SentencePiece: A Simple and Language Independent Subword Tokenizer and Detokenizer for Neural Text Processing},
  booktitle = {Proceedings of the 2018 Conference on Empirical Methods in Natural Language Processing: System Demonstrations},
  pages = {66--71},
  year = {2018},
  doi = {10.18653/v1/D18-2012}
}

@inproceedings{sanh2019distilbert,
  author = {Sanh, Victor and Debut, Lysandre and Chaumond, Julien and Wolf, Thomas},
  title = {DistilBERT, a Distilled Version of BERT: Smaller, Faster, Cheaper and Lighter},
  booktitle = {NeurIPS Workshop on Energy Efficient Machine Learning and Cognitive Computing},
  year = {2019}
}

@inproceedings{hinton2015distilling,
  author = {Hinton, Geoffrey and Vinyals, Oriol and Dean, Jeff},
  title = {Distilling the Knowledge in a Neural Network},
  booktitle = {NIPS Deep Learning and Representation Learning Workshop},
  year = {2015}
}

@misc{radford2019language,
  author = {Radford, Alec and Wu, Jeffrey and Child, Rewon and Luan, David and Amodei, Dario and Sutskever, Ilya},
  title = {Language Models are Unsupervised Multitask Learners},
  year = {2019},
  note = {Technical report}
}

@article{zhang2022opt,
  author = {Zhang, Susan and Roller, Stephen and Goyal, Naman and Artetxe, Mikel and Chen, Moya and Chen, Shuohui and Dewan, Christopher and Diab, Mona and Li, Xian and Lin, Xi Victoria and Mihaylov, Todor and Ott, Myle and Shleifer, Sam and Shuster, Kurt and Simig, Daniel and Koura, Punit Singh and Sridhar, Anjali and Wang, Tianlu and Zettlemoyer, Luke},
  title = {OPT: Open Pre-trained Transformer Language Models},
  journal = {arXiv preprint arXiv:2205.01068},
  year = {2022},
  doi = {10.48550/arXiv.2205.01068}
}

@misc{hfDistilgpt2,
  author = {{Hugging Face}},
  title = {distilbert/distilgpt2 Model Card},
  year = {2026},
  howpublished = {\url{https://huggingface.co/distilbert/distilgpt2}},
  note = {Accessed 20 May 2026}
}

@article{aho1975string,
  author = {Aho, Alfred V. and Corasick, Margaret J.},
  title = {Efficient String Matching: An Aid to Bibliographic Search},
  journal = {Communications of the ACM},
  volume = {18},
  number = {6},
  pages = {333--340},
  year = {1975},
  doi = {10.1145/360825.360855}
}

@book{cormen2009introduction,
  author = {Cormen, Thomas H. and Leiserson, Charles E. and Rivest, Ronald L. and Stein, Clifford},
  title = {Introduction to Algorithms},
  edition = {3},
  publisher = {MIT Press},
  year = {2009}
}

@article{bentley1980decomposable,
  author = {Bentley, Jon Louis and Saxe, James B.},
  title = {Decomposable Searching Problems I: Static-to-Dynamic Transformation},
  journal = {Journal of Algorithms},
  volume = {1},
  number = {4},
  pages = {301--358},
  year = {1980},
  doi = {10.1016/0196-6774(80)90015-2}
}

\end{document}